\newtheorem{lemma}{Lemma}
\newtheorem{theorem}{Theorem}
\newtheorem{corollary}{Corollary}
\theoremstyle{definition}
\newtheorem{example}{Example}
\newtheorem{definition}{Definition}
\title{
Control Design for Risk-Based Signal Temporal Logic Specifications
}
\author{Sleiman Safaoui, Lars Lindemann, Dimos V. Dimarogonas, Iman Shames, Tyler H. Summers
\thanks{This material is based on work supported by the United States Air Force Office of Scientific Research under award number FA2386-19-1-4073. This work was also supported in part by the Swedish Research Council (VR) and Defence Science and Technology Group, Australian Government, through agreement MyIP: ID9156.}
\thanks{S. Safaoui is with the department of Electrical Engineering and T. H. Summers is with the department of Mechanical Engineering at University of Texas at Dallas, Richardson, TX, USA, E-mail: \{sleiman.safaoui, tyler.summers\}@utdallas.edu. L. Lindemann and D. V. Dimarogonas are with the Division of Decision and Control Systems, KTH Royal Institute of Technology, 100 44 Stockholm, Sweden, E-mail: \{llindem, dimos\}@kth.se. I. Shames is with the University of Melbourne, Parkville, VIC 3010, Australia, E-mail: iman.shames@unimelb.edu.au. 
}
}
\begin{document}

\maketitle
\thispagestyle{empty} 

\begin{abstract}
We present a general framework for risk semantics on Signal Temporal Logic (STL) specifications for stochastic dynamical systems using axiomatic risk theory. We show that under our recursive risk semantics, risk constraints on STL formulas can be expressed in terms of risk constraints on atomic predicates. We then show how this allows a (\emph{stochastic}) STL risk constraint to be transformed into a risk-tightened \emph{deterministic} STL constraint on a related deterministic nominal system, enabling the application of existing STL methods. For affine predicate functions and a (coherent) Distributionally Robust Value at Risk measure, we show how risk constraints on atomic predicates can be reformulated as tightened deterministic affine constraints. We demonstrate the framework using a Model Predictive Control (MPC) design with an STL risk constraint.
\end{abstract}

\begin{IEEEkeywords}
Stochastic systems, optimization, constrained control.
\end{IEEEkeywords}

\section{Introduction} \label{sec:int}
\IEEEPARstart{T}EMPORAL logics allow to reason about temporal properties of systems and have traditionally been used in formal verification and model checking \cite{baier}. More recently, temporal logics have also been used to impose highly expressive mission specifications on complex autonomous systems. For systems under linear temporal logic (LTL) and metric interval temporal logic (MITL) specifications, motion planning and control synthesis algorithms have been proposed in \cite{kress2009temporal,guo2015multi,kloetzer2008fully}.

\emph{Signal Temporal Logic (STL)} is a temporal logic interpreted over dense-time real-valued \emph{(deterministic)} signals \cite{maler2004monitoring} similar to MITL; it allows to additionally impose quantitative spatial properties by means of predicates that go beyond the abstract use of propositions in LTL and MITL. STL is hence more expressive and has been the focus of motion planning and control synthesis in areas such as robotics. In addition to the Boolean satisfaction relation given for LTL and MITL specifications, one can associate quantitative semantics with an STL specification that allow to reason about how robustly (severely) a specification is satisfied (violated). These quantitative semantics come in the form of the robustness degree and the robust semantics \cite{fainekos2009robustness} as well as space robustness, time robustness, and other variants \cite{donze2010robust, mehdipour2019average,lindemann2019robust}. 

Control of systems under STL specifications is inherently different compared to systems under LTL and MITL specifications, which is mainly based on abstractions and automata theory. For deterministic discrete-time systems, the authors in \cite{raman2014model} transform the STL specification into mixed-integer linear constraints and use Model Predictive Control (MPC) to deal with these constraints. Similarly, MPC has been employed by maximizing certain forms of the quantitative semantics associated with the STL specification at hand \cite{mehdipour2019average,lindemann2019robust,pant2018fly}. These methods seem computationally more tractable than the approach presented in \cite{raman2014model} due to the use of smooth quantitative semantics. For deterministic continuous-time systems, \cite{lindemann2019efficient} uses timed-automata theory to decompose the STL specification into STL subspecifications that can be implemented by low-level feedback control laws, such as those in \cite{lindemann2018control,garg2019control,srinivasan2018control}. Learning-based methods for partially unknown systems have been presented in \cite{varnai2019prescribed,muniraj2018enforcing,xu2018advisory}.

A large fraction of the aforementioned research focuses on finite abstractions and/or \emph{deterministic} systems. However, methodological advances are required to account for inherent uncertainties and high-dimensional continuous spaces in autonomous systems, especially due to \emph{stochastic uncertainties} arising from the use of noisy data and learning components. Robust extensions of \cite{raman2014model} have been presented in \cite{sadraddini2015robust,raman2015reactive}. Probabilistic notions of STL for stochastic systems have been presented in
\cite{sadigh2016safe,farahani2018shrinking}. However, these formulations are either deterministic and based on a worst-case approach or utilize chance constraints, an incoherent measure of risk that can lead to undesirable decisions. 
Effective risk management in complex autonomous systems demands a more sophisticated approach to quantifying risks of specification violations. This motivates an axiomatic approach to risk, which has been advocated for in finance \cite{artzner1999coherent} and more recently in robotics \cite{majumdar2020should}. Control under coherent risk measures has been considered in \cite{samuelson2018safety,singh2018framework}. The extension to more complex and generic specifications, such as those captured by temporal logics, has however not been addressed.

\textbf{Contributions.}
1) We present a general framework for defining risk semantics of STL specifications for \emph{stochastic dynamical systems} using axiomatic risk theory. In particular, we compose risk metrics with predicate functions, which become stochastic in the considered setup. 
2) We then recursively define risk semantics for Boolean and temporal STL operators. For a given STL specification, we show that these risk semantics can be expressed as risk constraints on predicate functions over certain time intervals. We then show how this allows such a risk constraint to be transformed into a risk-tightened deterministic STL constraint on a related deterministic nominal system.
3) For affine predicates and a coherent Distributionally Robust Value at Risk measure, we show how risk constraints on predicate functions can be explicitly reformulated as tightened deterministic affine constraints. 
4) To demonstrate the framework, we use an MPC formulation to solve a risk-based STL control design problem, which we illustrate with numerical experiments.

\section{Risk Measures and Axiomatic Risk Theory} \label{sec:risk_measures}
A risk measure is a function that assigns a real number to a random variable, quantifying its size, typically related to one of its tails. More formally, let $(\Omega, \mathcal{F}, \mathbb{P})$ be a probability space, where $\Omega$ is the sample space, $\mathcal{F}$ is a $\sigma$-algebra of subsets of $\Omega$, and $\mathbb{P}$ is a probability measure on $\mathcal{F}$. Let $\mathbf{X}$ denote a set of real-valued random variables on $\Omega$ (i.e., $X\in \mathbf{X}$ is a Borel measurable function $X: \Omega \rightarrow \mathbb{R}$). A risk measure is a function $\rho : \mathbf{X} \rightarrow \mathbb{R} \cup \{\pm \infty\}$. In finance, elements of $\mathbf{X}$ represent the value of a financial position, and a risk measure quantifies the probability and severity of a financial loss. Here, elements of $\mathbf{X}$ will represent states of a stochastic system, and a \emph{risk measure quantifies the probability and severity of violating a signal temporal logic specification}.

\textbf{Risk Axioms.} Effective quantitative risk management in emerging complex autonomous systems is a major challenge, which motivates an axiomatic approach to risk measures. Four important axioms for a risk measure $\rho$ \cite{majumdar2020should} are:
\begin{enumerate}
    \item \textbf{Monotonicity:} If $X_1 \leq X_2$ then $\rho(X_1) \leq \rho(X_2)$;
    \item \textbf{Translation Invariance:} $\rho(X + c) = \rho(X) + c$;
    \item \textbf{Positive Homogeneity:} $\rho(\beta X) = \beta \rho(X)$;
    \item \textbf{Subadditivity:} $\rho(X_1 + X_2) \leq \rho(X_1) + \rho(X_2)$;
\end{enumerate}
$\forall X, X_1, X_2 \in \mathbf{X},$ and $c, \beta \in \mathbb{R}, \beta \geq 0$.
A risk measure is called \emph{coherent} if it satisfies all four of these axioms \cite{artzner1999coherent}. It has been argued that these axioms constitute natural desirable properties for risk measures in complex systems \cite{artzner1999coherent,majumdar2020should}. 
\emph{Spectral} or \emph{distortion risk measures} also satisfy
\begin{enumerate}
    \item[5)] \textbf{Law Invariance:} If $X_1, X_2 \in \mathbf{X}$ are identically distributed, then $\rho(X_1) = \rho(X_2)$.
    \item[6)] \textbf{Comonotone Additivity:} If $X_1, X_2 \in \mathbf{X}$ are comonotone (namely, $(X_1(\omega_2) - X_1(\omega_1))(X_2(\omega_2) - X_2(\omega_1)) \geq 0 \ \forall \omega_1, \omega_2 \in \Omega$), then $\rho(X_1 + X_2) = \rho(X_1) + \rho(X_2)$
\end{enumerate}
and can be viewed as refinements of coherent risk measures.

\textbf{Examples of Risk Measures.} Several commonly used and studied risk measures operating on $X \in \mathbf{X}$ include:
\begin{enumerate}
    \item \textbf{Expectation:} $\rho_E(X) = \mathbb{E}(X)$
    \item \textbf{Worst-Case:} $\rho_W(X) = \text{ess} \sup (X)$
    \item \textbf{Mean-Variance:} $MV_\lambda(X) = \mathbb{E}(X) + \lambda \text{Var}(X), \quad \lambda > 0$
    \item \textbf{Value at Risk (VaR) at level $\delta \in (0,1]$:} $\text{VaR}_{\delta}(X) = \inf_{t \in \mathbb{R}} \{ t \mid \mathbb{P}(X \leq t) \geq 1 - \delta \}$
    \item \textbf{Conditional Value at Risk (CVaR) at level $\delta \in (0,1]$:} $\text{CVaR}_{\delta}(X) = \inf_{t \in \mathbb{R}} \{ t + \frac{1}{\delta} \mathbb{E}[X - t]_+\}$,
    $[\cdot]_+ := \max(\cdot,0)$ 
    \item \textbf{Entropic Value at Risk (EVaR) at level $\delta \in (0,1]$:}  $\text{EVaR}_{\delta}(X) = \inf_{z > 0} \{ z^{-1} \ln( M_X(z)/\alpha ) \}$, 
    where $M_X(z) = \mathbb{E}(e^{zX})$ is the moment generating function of $X$. 
\end{enumerate}

Unfortunately, many widely used risk measures in robotics and engineering are not coherent, and can lead to serious miscalculations of risk, e.g., mean-variance and mean-standard-deviation fail to be monotone, and VaR lacks subadditivity. The widely used chance constraint in optimization models is closely related to VaR and has been used for notions of STL robustness for stochastic systems. We advocate for axiomatic risk theory with coherent risk as a more systematic and sophisticated approach to risk management for STL specifications in emerging safety-critical autonomous systems.

\textbf{Distributional Robustness.} Evaluating any of the above risk measures requires knowledge of the probability distribution of the associated random variable. In practice however, we are never given the probability distribution, only noisy data. Instead, we must estimate properties of the distribution from the noisy data, or make assumptions about the distribution. In the emerging area of distributionally robust optimization \cite{wiesemann2014distributionally}, this uncertainty in our knowledge of the probability distribution itself is explicitly accounted for. Rather than assuming a single probability distribution, we instead work with \emph{ambiguity sets} of distributions. Common examples of ambiguity sets include those with certain moment constraints, e.g., the set of distributions with given mean and variance $\mathscr{P} = \{\mathbb{P} \mid \mathbb{E}_\mathbb{P} [X] = \mu, \mathbb{E}_\mathbb{P} [(X-\mu)^2] = \Sigma \}$, or distance/divergence based sets, such as $\mathscr{P} = \{\mathbb{P} \mid d(\mathbb{P}, \hat{\mathbb{P}}) \leq \varepsilon \}$, where $d$ is a probability distance function, $\hat{\mathbb{P}}$ is a nominal distribution, and $\varepsilon > 0$ is a radius. A risk measure and an ambiguity set can be combined to obtain distributionally robust (DR) risk measures: $\rho_{\text{DR}} = \sup_{\mathbb{P} \in \mathscr{P}} \rho(X)$.
For example, DR-VaR, $\sup_{\mathbb{P} \in \mathscr{P}} \text{VaR}_\delta(X)$, with various moment-based ambiguity sets is a coherent risk measure \cite{zymler2013worst}.

\section {Risk-Based Signal Temporal Logic}
Traditionally, STL constraints are specified for \emph{deterministic} dynamical systems. 
STL formulas are based on predicates $\pi\in\{\top,\bot\}$ where $\top$ and $\bot$ denote true and false, respectively, and are obtained from evaluating a predicate function $\alpha: \mathbb{R}^n \rightarrow \mathbb{R}$ such that $\pi =\{ \alpha(x) \geq 0\}$, i.e., $\pi=\top$ if and only if $\alpha(x) \geq 0$ where $x\in\mathbb{R}$. A typical STL formula $\varphi$ is composed from logical and bounded-time temporal operators and can always be rewritten in \emph{negation normal form} \cite{fainekos2009robustness} (also referred to as positive normal form in\cite{sadraddini2015robust}), where the negation operator appears only at the atomic predicate level. 
We assume that STL formulas are given in negation normal form, which are defined recursively through the grammar:
\begin{align} \label{eq:stl_grammar}
    \varphi := \top | \pi | \neg \pi | \varphi^\prime \wedge \varphi^{\prime \prime} | \varphi^\prime \vee \varphi^{\prime \prime} | \varphi^\prime \mathscr{U}_{[a,b]} \varphi^{\prime \prime} | \varphi^\prime \mathscr{R}_{[a,b]} \varphi^{\prime \prime}
\end{align}
where $\neg$ is the negation operator, $\wedge$ and $\vee$ are conjunction and disjunction,
$\mathscr{U}_{[a,b]}$ and $\mathscr{R}_{[a,b]}$ are the until and release operators with $a \leq b$ and $a,b \in \mathbb{R}_{\geq 0} = [0, +\infty)$. Note that $\vee$ and $\mathscr{R}$ are duals of $\wedge$ and $\mathscr{U}$ respectively, since $\varphi^\prime \wedge \varphi^{\prime \prime} = \neg (\neg \varphi^\prime \vee \neg \varphi^{\prime \prime})$ and $\varphi^\prime \mathscr{U}_{[a,b]} \varphi^{\prime \prime} = \neg(\neg \varphi^\prime \mathscr{R}_{[a,b]} \neg \varphi^{\prime \prime})$. The negation normal form replaces the negation of a formula by including all operators and their duals in the grammar. 
We can also define additional operators such as
eventually ($\diamondsuit_{[a,b]} \varphi := \top \mathscr{U}_{[a,b]} \varphi$)
and always ($\square_{[a,b]} \varphi:= \bot \mathscr{R}_{[a,b]} \varphi$).

Denote by $\xi_N$ a \emph{deterministic} discrete-time finite-horizon system run of length $N+1$. If the signal starts at time $t$, then $ (\xi_N,t) = x_t x_{t+1} \dots x_{t+N}$ where $x_i$ is the \emph{deterministic} state at time $i$. $(\xi_N,t) \models \varphi$ denotes $\xi_N$ satisfying $\varphi$ (starting) at time $t$. By convention, $\xi_N$ satisfies $\varphi$ (denoted $\xi_N \models \varphi$) if $(\xi_N, 0)\models \varphi$.
The traditional STL semantics are defined as: 
$(\xi_N, t) \models \top$ trivially holds, 
$(\xi_N, t) \models \pi$ iff $\alpha(x_t) \geq 0$, 
$(\xi_N, t) \models \neg \pi$ iff $\neg ((\xi_N, t) \models \pi)$, 
$(\xi_N, t) \models \varphi \wedge \varphi^\prime$ iff $(\xi_N, t) \models \varphi \wedge (\xi_N, t) \models \varphi^\prime$, 
$(\xi_N, t) \models \varphi \vee \varphi^\prime$ iff $(\xi_N, t) \models \varphi \vee (\xi_N, t) \models \varphi^\prime$, 
$(\xi_N, t) \models \varphi \mathscr{U}_{[a,b]} \varphi^\prime$ iff $\exists t^\prime \in [t+a, t+b]$ s.t. $(\xi_N, t^\prime) \models \varphi^\prime \wedge \forall t^{\prime \prime} \in [t, t^\prime], (\xi_N, t^{\prime \prime}) \models \varphi$, 
and $(\xi_N, t) \models \varphi \mathscr{R}_{[a,b]} \varphi^\prime$ iff $\forall t^\prime \in [t+a, t+b]$, $(\xi_N, t^\prime) \models \varphi^\prime \nonumber \vee \exists t^{\prime \prime} \in [t, t^\prime], (\xi_N, t^{\prime \prime}) \models  \varphi$. 
An STL formula horizon can be upper bounded by the maximum over the sums of the nested upper bounds of the temporal operators. The bound is denoted $\text{len}(\varphi)$ and defined recursively as:
$\text{len}(\top) = \text{len}(\pi) = \text{len}(\neg \pi) = 0$,
$\text{len}(\varphi \wedge \varphi^\prime) =\text{len}(\varphi \vee \varphi^\prime)= \max(\text{len}(\varphi), \text{len}(\varphi^\prime))$, 
and $\text{len}(\varphi \mathscr{U}_{[a,b]} \varphi^\prime) =\text{len}(\varphi \mathscr{R}_{[a,b]} \varphi^\prime)= b + \max(\text{len}(\varphi),\text{len}(\varphi^\prime))$. 
As opposed to the Boolean nature of the above semantics, STL allows to define a quantitative robustness measure, which is a function $\rho^\varphi(\xi_N,t)$ \cite{donze2010robust} that assigns the formula $\varphi$ a real value such that $(\xi_N,t) \models \varphi$ if $\rho^\varphi (\xi_N,t) > 0$, indicating how robustly the formula is satisfied. Note that $\rho^\varphi(\xi_N,t)$ is deterministic.

\emph{Stochastic} systems require an alternative quantitative semantics. We denote a discrete-time finite-horizon \emph{stochastic process} starting at time $t$ as ($\Xi_N, t) = X_t X_{t+1} \dots X_{t+N}$ where $X_i$ is an $\mathbb{R}^n$-valued random variable from a set $\mathbf{X}$ of random variables that represents the \emph{stochastic} system state at time $i$. Since the system state is stochastic, the question of $\Xi_N$ satisfying an STL formula is ill-posed, and the aforementioned Boolean semantics do not apply (notice that $\alpha(X)$ is a random variable). 
Instead, we quantify the risk of \emph{violating} a specification by introducing a risk measure into STL formulas. (This contrasts with STL robustness measures, which quantify the \emph{satisfaction} of a formula; it is more natural to consider the risk of violating a formula.)
While there are several ways to define the risk of \emph{violating} a formula $\mathcal{R}$,
we choose to define the risk of violating an atomic predicate using the risk measures $\rho$ discussed in Section \ref{sec:risk_measures} and build up the STL risk-of-violation semantics recursively.

\begin{definition} {\emph{STL Risk Semantics.}} \label{def:stl_risk_semantics}
    For the stochastic system run $\Xi_N$, risk measure $\rho$, atomic predicate $\pi$, and STL formulas $\varphi$ and $\varphi^\prime$, the STL risk is defined recursively as
    \begin{itemize}
        \item $\mathcal{R} ((\Xi_N, t) \not\models \top) := -\infty$, 
        \item $\mathcal{R} ((\Xi_N, t) \not\models \bot) := +\infty$, 
        \item $\mathcal{R} ((\Xi_N, t) \not\models \pi) := \rho(-\alpha(X_t))$,  
        \item $\mathcal{R} ((\Xi_N, t) \not\models \neg\pi) := \rho(\alpha(X_t))$,
        \item $\mathcal{R}((\Xi_N, t) \not\models \varphi \wedge \varphi^\prime) := \max (\mathcal{R} ((\Xi_N, t) \not\models \varphi),\\ 
        \mathcal{R}((\Xi_N, t) \not\models \varphi^\prime))$,
        \item $\mathcal{R}((\Xi_N, t) \not\models \varphi \lor \varphi^\prime) := \min (\mathcal{R} ((\Xi_N, t) \not\models \varphi),\\
        \mathcal{R}((\Xi_N, t) \not\models \varphi^\prime))$,
        \item $\mathcal{R} ((\Xi_N,t) \not\models \varphi \mathscr{U}_{[a, b]} \varphi^\prime) := \min_{i \in [a,b]}( \max(\\ 
        \mathcal{R} ((\Xi_N,t+i) \not\models \varphi^\prime),\max_{j \in [0,i]} \mathcal{R} ((\Xi_N, t+j) \not\models \varphi)))$,
        \item $\mathcal{R} ((\Xi_N, t) \not\models \varphi \mathscr{R}_{[a, b]} \varphi^\prime) := \max_{i \in [a,b]}(  \min(\\
        \mathcal{R} ((\Xi_N, t+i) \not\models \varphi^\prime), \min_{j \in [0,i]} \mathcal{R} ((\Xi_N, t+j) \not\models \varphi)))$.  
    \end{itemize}
\end{definition}
Often it is desirable to bound the STL risk of violation of a formula $\varphi$ by some prescribed value $\delta$, i.e., $\mathcal{R}((\Xi_N, t) \not \models \varphi) \leq \delta$.
The risk metrics of Section \ref{sec:risk_measures} can be used to quantify the aforementioned risk. It must be noted that these metrics are generally asymmetric, i.e. $\rho(- \alpha(X_t)) \not= -\rho(\alpha(X_t))$, and appear only at the atomic level
(which further justifies using the grammar \eqref{eq:stl_grammar}).
This will be explored in Section \ref{sec:reform_stl_constraint}.

\section{Control Design for Risk-Constrained STL} \label{sec:ctrl_design}
We consider the discrete-time stochastic linear system:
\begin{equation} \label{eq:dts_lin_sys}
X_{t+1} = A_t X_t + B_t u_t + W_t, \ \  X_0 = x_0 ,
\end{equation}
where $X_t \in \mathbb{R}^n$, $u_t \in \mathbb{R}^m$, and $W_t \in \mathbb{R}^n$ are respectively the state, control input, and stochastic disturbance of the system at time $t$,
$A_t \in \mathbb{R}^{n \times n}$ and $B_t \in \mathbb{R}^{n \times m}$ are respectively the system dynamics and input matrices, and $x_0$ is the known initial system state. 
The disturbances $\{W_t\}$ are assumed independent according to unknown distributions $\mathbb{P}_{W_t}$, which belong to ambiguity sets $\mathscr{P}^W_t$. Here we consider moment-based ambiguity sets with given mean and variance:
$\mathscr{P}^{W}_t = \{ \mathbb{P}_{W_t} \ |\  \mathbb{E}[W_t] = \overline{W}_t, \ \mathbb{E}[(W_t - \overline{W}_t)(W_t - \overline{W}_t)^T] = \Sigma_{W_t}\}.$

From \eqref{eq:dts_lin_sys} we obtain
\begin{align} \label{eq:lin_sys_soln}
    X_\tau = \Phi(\tau, t)X_t + \Sigma_{k = 1}^{\tau-1}\Phi(\tau, k+1)(B_k u_k+W_k),
\end{align}
where $\Phi(.,.)$ is the state transition matrix defined as
\begin{align} \label{eq:state_trans_mat}
    \Phi(\tau,t) = 
    \begin{dcases}
        A_{\tau-1}A_{\tau-2}\dots A_t \ \ \ &\tau > t \geq 0 \\
        I_n    &\tau = t \geq 0.
    \end{dcases}
\end{align}

We consider a finite-horizon control design problem where the goal is to determine a state feedback control policy for $\eqref{eq:dts_lin_sys}$ that satisfies a risk constraint associated with an STL formula $\varphi$. Specifically, we consider
\begin{align} \label{socrisk}
    J^*(x_0)= & \min_{\mu \in \mathcal{M}} \ \mathbb{E} [J({X}_{0: N}, {u}_{0:N-1})]\ \ \ \text{subject to}  \\  
    &X_t = \Phi(t, 0)x_0 + \Sigma_{k=0}^{t-1}\Phi(t,k+1)(B_k u_k + W_k),  \nonumber \\ 
    &\mathcal{R}((\Xi_N,0) \not\models \varphi) \leq \delta,  \quad
    {u}_{0:N-1} \in U \nonumber,
\end{align}
where the decision variable is a state feedback policy $\mu = [\mu_0, \mu_1, \dots, \mu_{N-1}]$ with $u_t=\mu_t(x_t)$ from a set $\mathcal{M}$ of measurable policies, ${u}_{t:N-1} = [u^T_t, u^T_{t+1}, \dots, u^T_{N-1}]^T$, $X_{t:N} = [X^T_t, X^T_{t+1}, \dots, X^T_{N+1}]^T$ and ${W}_{t:N-1} = [W^T_t, W^T_{t+1}, \dots, W^T_{N-1}]^T$, $N$ is the time horizon, $J$ is a stage cost function with expectation taken with respect to the disturbance sequence $\{W_t\}$, $\delta \in \mathbb{R}$ is a user-defined risk bound, and $U \subset \mathbb{R}^{Nm}$ is an input constraint set. The challenge lies in the uncertainty in the system model and its appearance in the STL risk constraint. We approach this using Model Predictive Control (MPC) and a reformulation of the STL risk constraint into a tightened deterministic STL constraint so that existing control approaches for deterministic STL constraints can be used such as summarized in Section~\ref{sec:int}.

Using MPC to (approximately) solve \eqref{socrisk} reduces the problem to a sequence of open-loop optimization problems.
For an STL problem with formula $\varphi$, a natural choice for the prediction horizon is $N \geq \text{len}(\varphi)$ with a system run of $N_s>N$. 
The MPC optimization problem at time $t \leq N_s - N$ given the observed state, $x_t$, is:
\begin{align} \label{eq:mpc_prob}
    J^*(x_t)=&\min_{u_{t:t+N-1}} \ \mathbb{E} [J(X_{t:t+N}, u_{t:t+N-1})]\ \ \ \text{subject to}  \\  
    &X_\tau = \Phi(\tau, t)x_t + \Sigma_{k=t}^{\tau-1}\Phi(\tau,k+1)(B_k u_k + W_k ),  \nonumber \\ 
    &\mathcal{R}((\Xi_{N}, 0) \not\models \varphi) \leq \delta, \quad u_{t:t+N-1} \in U, \nonumber
\end{align}
where the decision variable is the open-loop control sequence $u_{t:t+N-1}$, expectation is with respect to $W_{t:t+N-1}$. Solving the optimization problem yields the future optimal control sequence $u^*_{t:t+N-1}=[u^*_t, \dots, u^*_{t+N-1}]$. Only the first component $u^*_t$ of this plan is implemented, and the problem is solved again after the next state realization is observed. Thus, the MPC policy is $\mu_{\text{MPC}}(x_t) = u^*_t$. 
We assume that $\mathbb{E} [J(X_{t:t+N}, u_{t:t+N-1})]$ in \eqref{eq:mpc_prob} can be evaluated analytically, which is the case for quadratic $J$, allowing us to focus solely on challenges in accounting for the STL risk constraint.

\section{Reformulation of STL Risk Constraints} \label{sec:reform_stl_constraint}
In this section, we demonstrate how the optimization problem \eqref{eq:mpc_prob} can be reformulated into an optimization problem with \emph{deterministic}, tightened STL constraints on the nominal system dynamics. We also show how to explicitly write the tightened constraints on atomic predicates for one specific coherent risk measure, Distributionally Robust Value at Risk.

\subsection{From STL formula violation risk to atomic predicate violation risk} \label{sec:formula_risk_to_atomic_risk}
The semantics in Definition \ref{def:stl_risk_semantics} are useful for two main reasons: 1) the risk metrics described in Section \ref{sec:risk_measures} appear only at the atomic predicate level and the risk of failing to satisfy an STL formula is defined recursively from there, 2) through the recursive STL risk semantics, all operators (except negation) are defined in terms of $\min$ and $\max$ operators over time intervals. These reasons allow transforming a risk-based STL constraint into similar risk constraints on atomic predicates. This is formalized in Theorem \ref{thm:stl_risk_to_predicates}.

\begin{theorem}{}\label{thm:stl_risk_to_predicates}
Consider a risk measure $\rho$ from Section \ref{sec:risk_measures}, STL risk semantics $\mathcal{R}$ from Definition \ref{def:stl_risk_semantics}, STL formula $\varphi$ given by \eqref{eq:stl_grammar}, and risk bound $\delta \in \mathbb{R}$.
The STL risk constraint $\mathcal{R}((\Xi_N, t)\not\models \varphi) \leq \delta$ can be transformed into conjunctions and disjunctions over time intervals of risk constraints on atomic predicates of the form $\rho(\pm \alpha(X_t)) \leq \delta$ where $\alpha$ is a predicate function associated with the atomic predicate $\pi$.
\end{theorem}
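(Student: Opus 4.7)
The plan is to prove the statement by structural induction on the grammar \eqref{eq:stl_grammar}, exploiting two elementary observations about thresholding min and max: for any real numbers $A, B$ and any $\delta \in \mathbb{R}$, we have $\max(A,B) \leq \delta$ iff $A \leq \delta$ and $B \leq \delta$, and $\min(A,B) \leq \delta$ iff $A \leq \delta$ or $B \leq \delta$. These identities extend directly to finite $\min$'s and $\max$'s over a time window $[a,b]\cap\mathbb{Z}$, turning them into finite conjunctions and disjunctions of scalar inequalities. Since every clause of Definition~\ref{def:stl_risk_semantics} expresses $\mathcal{R}$ recursively using only $\min$ and $\max$ (applied pointwise or over bounded time intervals), thresholding commutes through the recursion.

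For the base cases, $\mathcal{R}((\Xi_N,t)\not\models\top)=-\infty\leq\delta$ is trivially true and can be dropped; $\mathcal{R}((\Xi_N,t)\not\models\bot)=+\infty\leq\delta$ is trivially false (the overall formula is infeasible unless this branch is eliminated by a surrounding disjunction, which the recursion handles uniformly); and for the literals $\pi$ and $\neg\pi$, the constraints $\mathcal{R}((\Xi_N,t)\not\models\pi)\leq\delta$ and $\mathcal{R}((\Xi_N,t)\not\models\neg\pi)\leq\delta$ are, by definition, already of the required forms $\rho(-\alpha(X_t))\leq\delta$ and $\rho(\alpha(X_t))\leq\delta$, respectively.

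For the inductive step, I would assume that for any subformula $\varphi$ of strictly smaller structural depth and any time index $s$, the constraint $\mathcal{R}((\Xi_N,s)\not\models\varphi)\leq\delta$ has already been rewritten as a Boolean combination (over time) of atomic-predicate risk constraints. Then I would handle each operator case in turn: $\varphi\wedge\varphi'$ reduces via $\max\leq\delta$ to the conjunction of the two rewritten constraints; $\varphi\vee\varphi'$ reduces via $\min\leq\delta$ to their disjunction; $\varphi\,\mathscr{U}_{[a,b]}\,\varphi'$ expands into $\bigvee_{i\in[a,b]}\bigl(\mathcal{R}((\Xi_N,t+i)\not\models\varphi')\leq\delta\,\wedge\,\bigwedge_{j\in[0,i]}\mathcal{R}((\Xi_N,t+j)\not\models\varphi)\leq\delta\bigr)$; and $\varphi\,\mathscr{R}_{[a,b]}\,\varphi'$ symmetrically expands into $\bigwedge_{i\in[a,b]}\bigl(\mathcal{R}((\Xi_N,t+i)\not\models\varphi')\leq\delta\,\vee\,\bigvee_{j\in[0,i]}\mathcal{R}((\Xi_N,t+j)\not\models\varphi)\leq\delta\bigr)$. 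Applying the inductive hypothesis to each remaining $\mathcal{R}(\cdot)\leq\delta$ term replaces it with atomic-predicate risk constraints.

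The argument is conceptually light; the main bookkeeping obstacle is simply that the time indices shift inside the temporal operators (from $t$ to $t+i$, $t+j$), so one must keep careful track of which atomic predicates $\rho(\pm\alpha(X_{t+\cdot}))\leq\delta$ appear with which shifted time arguments after the full unrolling. Note also that because the same threshold $\delta$ is propagated unchanged into every atomic constraint, no monotonicity or coherence assumptions on $\rho$ are needed for this reduction: Theorem~\ref{thm:stl_risk_to_predicates} is purely a consequence of the min/max structure of Definition~\ref{def:stl_risk_semantics}.
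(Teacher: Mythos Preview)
Your proposal is correct and follows essentially the same approach as the paper: an eight-case analysis over the grammar \eqref{eq:stl_grammar}, pushing the threshold $\delta$ through $\min$/$\max$ to obtain conjunctions and disjunctions of atomic-predicate risk constraints, with recursion handling nested operators. The only cosmetic difference is that you frame the argument explicitly as structural induction and write the temporal expansions using $\bigvee/\bigwedge$, whereas the paper uses $\exists/\forall$ over the time window and the phrase ``other cases follow recursively''; your closing remark that no coherence axioms on $\rho$ are needed is a valid observation that the paper leaves implicit.
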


\begin{proof}
The proof is a consequence of the grammar \eqref{eq:stl_grammar} and the structure of the risk semantics which apply the risk measure only on the atomic predicate level. We consider eight cases (other cases follow recursively):  
    \begin{enumerate}
        \item $\mathcal{R} ((\Xi_N, t) \not\models \top) \leq \delta \Leftrightarrow -\infty \leq \delta$ \quad \text{(always true)}
        \item $\mathcal{R} ((\Xi_N, t) \not\models \bot) \leq \delta \Leftrightarrow +\infty \leq \delta$ \quad \text{(always false)}
        \item $\mathcal{R} ((\Xi_N, t) \not\models \pi) \leq \delta \Leftrightarrow \rho(-\alpha(X_t)) \leq \delta$
        \item $\mathcal{R} ((\Xi_N, t) \not\models \neg\pi) \leq \delta \Leftrightarrow \rho(\alpha(X_t))\leq \delta$
        \item $\mathcal{R}((\Xi_N, t) \not\models \varphi \wedge \varphi^\prime) \leq \delta \\
        \Leftrightarrow \max (\mathcal{R} ((\Xi_N, t) \not\models \varphi), \mathcal{R}((\Xi_N, t) \not\models \varphi^\prime))\leq \delta \\
        \Leftrightarrow \mathcal{R}((\Xi_N, t) \not\models \varphi)\leq \delta \wedge \mathcal{R}((\Xi_N, t) \not\models \varphi^\prime)\leq \delta$
        \item $\mathcal{R}((\Xi_N, t) \not\models \varphi \vee \varphi^\prime) \leq \delta \\
        \Leftrightarrow \min (\mathcal{R} ((\Xi_N, t) \not\models \varphi), \mathcal{R}((\Xi_N, t) \not\models \varphi^\prime))\leq \delta \\
        \Leftrightarrow \mathcal{R}((\Xi_N, t) \not\models \varphi)\leq \delta \nonumber \vee \mathcal{R}((\Xi_N, t) \not\models \varphi^\prime)\leq \delta $
        \item $\mathcal{R} ((\Xi_N, t) \not\models \varphi \mathscr{U}_{[a, b]} \varphi^\prime) \leq \delta \\ \Leftrightarrow \min_{i \in [a,b]}( \max(\mathcal{R} ((\Xi_N, t+i) \not\models \varphi^\prime), \\
        \max_{j \in [0,i]} \mathcal{R} ((\Xi_N, t+j) \not\models \varphi))) \leq \delta \\
        \Leftrightarrow \exists t^\prime \in [t+a, t+b] \mathcal{R}((\Xi_N,t^\prime)\not\models \varphi^\prime) \leq \delta \\ \wedge \forall t^{\prime \prime} \in [t, t^\prime] \mathcal{R}((\Xi_N,t^\prime)\not\models \varphi) \leq \delta$
        \item $\mathcal{R} ((\Xi_N, t) \not\models \varphi \mathscr{R}_{[a, b]} \varphi^\prime) \leq \delta \\ \Leftrightarrow \max_{i \in [a,b]}(\min(\mathcal{R} ((\Xi_N, t+i) \not\models \varphi^\prime), \\ 
        \min_{j \in [0,i]} \mathcal{R} ((\Xi_N, t+j) \not\models \varphi))) \leq \delta \\
        \Leftrightarrow \forall t^\prime \in [t+a, t+b] \mathcal{R}((\Xi_N,t^\prime)\not\models \varphi^\prime) \leq \delta \\
        \vee \exists t^{\prime \prime} \in [t, t^\prime] \mathcal{R}((\Xi_N,t^\prime)\not\models \varphi) \leq \delta$
\end{enumerate}
The $\forall$ and $\exists$ operators can be expressed as chains of $\wedge$ and $\vee$ operators respectively across different time steps resulting in time intervals over which conjunctions and disjunctions of deterministic constraints (lines 3 and 4) are formed.
Lines 1 and 2 are trivial and can be safely discarded.
\end{proof}

Using Theorem \ref{thm:stl_risk_to_predicates}, it is clear that 
the risk level $\delta$ is carried through the formula and only appears at the atomic level. All operators, other than negations, are also preserved and they reappear to connect the risk of violating atomic predicates; e.g. the risk of failing to satisfy the conjunction of two predicates is the conjunction of the risk of failing to satisfy each predicate. Furthermore, the risk measure of an atomic predicate integrates the stochasticity in the state, leaving us with \emph{deterministic} inequalities of the form:
\begin{align}
    & \rho(-\alpha(X_t)) \leq \delta & 
    & \Leftrightarrow &
    & \delta - \rho(-\alpha(X_t)) \geq 0 & \label{eq:delta_minus_rho_minus}\\
    & \rho(\alpha(X_t)) \leq \delta & 
    & \Leftrightarrow &
    & \delta - \rho(\alpha(X_t)) \geq 0. & \label{eq:delta_minus_rho}
\end{align}

\subsection{Risk-tightened predicates}\label{sec:risk_tightened_predicates}
Having found atomic predicate risk constraints \eqref{eq:delta_minus_rho_minus}  and \eqref{eq:delta_minus_rho}, we now turn to reformulating the stochastic STL problem into a deterministic one assuming affine predicates\footnote{Note that the use of affine predicates only is not particularly restrictive since most Mixed-Integer Linear Programming (MILP) tools, widely used with STL, only allow affine predicates.}.
Consider the following rearrangement of \eqref{eq:dts_lin_sys}:
\vspace{-0.2cm}
\begin{align}\label{eq:X_reform}
    X_t = & \overbrace{A_{t-1} A_{t-2} \dots A_0}^{:=G_{t-1}} x_0 \\ 
    & + \overbrace{\begin{bmatrix}
        A_{t-1} \dots A_1 B_0 
        & \dots 
        & A_{t-1} B_{t-2} 
        & B_{t-1} 
    \end{bmatrix}}^{:=H_{t-1}}
    \begin{bmatrix}
        u_0^T \dots  u_{t-1}^T
    \end{bmatrix}^T
     \nonumber \\
    & + \overbrace{\begin{bmatrix}
        A_{t-1} \dots A_1 
        & \dots
        & A_{t-1}
        & I
    \end{bmatrix}}^{:=L_{t-1}}
    \begin{bmatrix}
        W_0^T \dots  W_{t-1}^T
    \end{bmatrix}^T \nonumber \\ 
    = & G_{t-1}x_0 + H_{t-1}u_{0:t-1} + L_{t-1}W_{0:t-1}. \nonumber
\end{align}
    
Any random variable can be written as a sum of its expectation and a zero-mean random variable. Let $\overline{W}_{0:t-1} := \mathbb{E}[W_{0:t-1}]$ ($\overline{W}_{t} := \mathbb{E}[W_{t}]$). We thus have:
\begin{align}
    X_t &= \overline{x}_t + \hat{X}_t \label{eq:X_sum}\\
    \overline{x}_t &= \mathbb{E}[X_t] = G_{t-1}x_0 + H_{t-1}u_{0:t-1} + L_{t-1}\overline{W}_{0:t-1} \label{eq:X_mean}\\
    \hat{X}_t &= L_{t-1}(W_{0:t-1} - \overline{W}_{0:t-1}) \label{eq:X_0mean}
\end{align}
It is easy to see that \eqref{eq:X_mean} results in the deterministic system: 
\begin{align}
    \overline{x}_t &= A_{t-1}\overline{x}_{t-1} + B_{t-1}\overline{u}_{t-1} + \overline{W}_{t-1}. \label{eq:x_det_sys}
\end{align}

Using \eqref{eq:X_sum} in \eqref{eq:delta_minus_rho_minus} and \eqref{eq:delta_minus_rho}, assuming affine predicates $\alpha(X_t) = a^T X_t + b$, and using \emph{translation invariance} of $\rho$ we get:
\vspace*{-.5cm}
\begin{align} \label{eq:det_affine_pred}
    \delta - \rho(\pm \alpha(X_t)) \geq 0 
    &\Leftrightarrow 
    \overbrace{\delta - \rho(\pm a^T \hat{X}_t)}^{:=\overline{\delta}_{\rho\pm}(\delta,\hat{X}_t)} \mp \alpha(\overline{x}_t) \geq 0
\end{align}
    
Notice that $\overline{\delta}_{\rho*}: \mathbb{R} \times \mathbf{X} \rightarrow \mathbb{R} \cup \{\pm \infty\}$ is deterministic ($*\in\{+,-\}$). Thus, \eqref{eq:det_affine_pred} is affine in $\overline{x}_t$ and can be used as an \emph{affine deterministic predicate function}.
\begin{definition}{\emph{Risk-Tightened Predicates}.} \label{def:risk_tightened_predicate}
    Given risk metric $\rho$, risk bound $\delta \in \mathbb{R}$, affine predicate function $\alpha$ and random variable $X_t$ per \eqref{eq:X_sum}, we define risk-tightened affine predicates as: 
    \begin{align}
        \overline{\pi}&:= 
        \{\overline{\alpha}_{\rho*}(\delta, X_t)\geq 0\} \nonumber \\
        &:=
        \begin{dcases}
            \{\overline{\delta}_{\rho-}(\delta,\hat{X}_t) + \alpha(\overline{x}_t) \geq 0\} \quad \text{if } *=- \\
            \{\overline{\delta}_{\rho+}(\delta,\hat{X}_t) - \alpha(\overline{x}_t) \geq 0 \} \quad \text{if } *=+
        \end{dcases}
    \end{align}
\end{definition} 

We now present a corollary that establishes the form of the deterministic STL constraint and the corresponding deterministic system.
\begin{corollary}\label{cor:det_constraint_to_det_formula}
    Consider STL formula $\varphi$ given by \eqref{eq:stl_grammar}. The risk constraint $\mathcal{R}((\Xi_N,t)\not\models \varphi) \leq \delta$ on \eqref{eq:dts_lin_sys} with affine predicates is equivalent to a \emph{deterministic} STL constraint $\overline{\varphi}$ on the mean trajectory $\overline{\xi}_N = \overline{x}_t \dots \overline{x}_{t+N}$ of \eqref{eq:dts_lin_sys} given by \eqref{eq:x_det_sys} with affine predicates per Definition \ref{def:risk_tightened_predicate}.
\end{corollary}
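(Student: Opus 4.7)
The plan is to combine Theorem~\ref{thm:stl_risk_to_predicates} with the affine reformulation in \eqref{eq:det_affine_pred} and then argue by structural induction on $\varphi$. First I would invoke Theorem~\ref{thm:stl_risk_to_predicates} to rewrite $\mathcal{R}((\Xi_N,t)\not\models \varphi)\leq \delta$ as a formula built from the Boolean operators $\wedge,\vee$ and the temporal operators $\mathscr{U}_{[a,b]},\mathscr{R}_{[a,b]}$ (via $\forall/\exists$ quantifications over the bounded intervals) applied to atomic risk constraints of the form $\rho(-\alpha(X_\tau)) \leq \delta$ or $\rho(\alpha(X_\tau)) \leq \delta$ at various times $\tau\in[t,t+\mathrm{len}(\varphi)]$. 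The structure of this expression mirrors the grammar \eqref{eq:stl_grammar} of $\varphi$ exactly, with negations pushed down to the atomic level.

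Next, for each atomic risk constraint I would apply the rearrangement \eqref{eq:det_affine_pred}, which uses the translation invariance axiom of $\rho$ together with the affine decomposition $\alpha(X_\tau)=a^T\overline{x}_\tau + a^T\hat{X}_\tau + b$ from \eqref{eq:X_sum}. This turns $\rho(\pm\alpha(X_\tau))\leq \delta$ into the deterministic affine predicate $\overline{\alpha}_{\rho\pm}(\delta,X_\tau)\geq 0$ of Definition~\ref{def:risk_tightened_predicate}, evaluated at the mean state $\overline{x}_\tau$ of the deterministic system \eqref{eq:x_det_sys}. I would then define the formula $\overline{\varphi}$ to be obtained from $\varphi$ by replacing every atomic predicate $\pi=\{\alpha(x)\geq 0\}$ by the risk-tightened predicate $\overline{\pi}_{\rho-}$ and every negated atomic predicate $\neg\pi$ by $\overline{\pi}_{\rho+}$, keeping the Boolean and temporal operators and their time bounds unchanged.

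The core step is then an induction on the depth of the formula, showing that at every node of the parse tree the aggregation used inside $\mathcal{R}$ (the $\max$/$\min$ and $\min_i\max_j$/$\max_i\min_j$ constructions of Definition~\ref{def:stl_risk_semantics}) combined with the threshold inequality $\leq \delta$ corresponds exactly to the standard deterministic STL satisfaction relation of the matching operator in $\overline{\varphi}$ evaluated on $\overline{\xi}_N$. The base case is handled by \eqref{eq:det_affine_pred} and Definition~\ref{def:risk_tightened_predicate}; for the inductive step, each case ($\wedge$, $\vee$, $\mathscr{U}_{[a,b]}$, $\mathscr{R}_{[a,b]}$) follows directly from the logical equivalences already derived in the eight cases of the proof of Theorem~\ref{thm:stl_risk_to_predicates}, since they show $\max(\cdot,\cdot)\leq \delta$ distributes as $\wedge$, $\min(\cdot,\cdot)\leq \delta$ distributes as $\vee$, and the interval-indexed $\min_i\max_j$ constructions correspond to the $\exists/\forall$ structure in the STL semantics of $\mathscr{U}$ and $\mathscr{R}$.

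The main obstacle I anticipate is bookkeeping rather than conceptual: one must carefully verify that the polarity of the risk tightening ($\rho-$ vs.\ $\rho+$) tracks the polarity of the atomic predicate after negations have been propagated into negation normal form, and that the tightening value $\overline{\delta}_{\rho\pm}(\delta,\hat{X}_\tau)$ is correctly associated with the correct time index $\tau$ along the horizon. Since the ambiguity sets $\mathscr{P}^W_\tau$ and the matrices $L_{\tau-1}$ in \eqref{eq:X_0mean} are time-varying, the risk-tightened deterministic predicate at each time step generally differs from those at other time steps, but this is already encoded in Definition~\ref{def:risk_tightened_predicate}, so the induction goes through and yields the claimed equivalence.
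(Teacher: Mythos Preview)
Your proposal is correct and follows essentially the same route as the paper: decompose $X_\tau$ into its mean and zero-mean parts via \eqref{eq:X_sum}, use translation invariance to obtain the deterministic tightened predicates of Definition~\ref{def:risk_tightened_predicate}, and then invoke the recursive structure of Definition~\ref{def:stl_risk_semantics} (equivalently Theorem~\ref{thm:stl_risk_to_predicates}) to recover $\overline{\varphi}$. The paper's proof is a one-sentence sketch of exactly this argument; your explicit structural induction on the grammar and your remarks on polarity and time-indexing simply spell out what the paper leaves implicit.
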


\begin{proof}
    The proof follows from the reformulation of \eqref{eq:dts_lin_sys} as \eqref{eq:X_reform} and using \eqref{eq:X_sum} in the affine predicate $\alpha(X_t)$. This results in the deterministic system \eqref{eq:x_det_sys} and $\rho(\pm\alpha(X_t))\leq \delta$ iff $\overline{\alpha}_{\rho \pm}(\delta, X_t) \geq 0$.
    $\overline{\varphi}$ follows from Definitions \ref{def:stl_risk_semantics}  and \ref{def:risk_tightened_predicate}.
\end{proof}

\vspace*{-.35cm}
\begin{example} \label{ex:new_phi_derivation}
    Consider the atomic predicates $\pi_1, \pi_2, \pi_3$ corresponding to $\{\alpha_1(x) \geq 0\}, \{\alpha_2(x) \geq 0\}, \{\alpha_3(x) \geq 0\}$ respectively (affine predicate functions) and the formula $\varphi = \pi_1 \mathscr{U}_{[3,5]} (\pi_2 \wedge \neg \pi_3)$. 
    The risk of violating $\varphi$ with $\delta$ risk bound is $\mathcal{R}((\Xi_5, 0) \not \models \varphi) \leq \delta$.
     Theorem \ref{thm:stl_risk_to_predicates} and Definition \ref{def:stl_risk_semantics} yield\vspace*{-.25cm}
        \begin{multline*}
            \mathcal{R}((\Xi_5, 0) \not \models \pi_1 
            \mathscr{U}_{[3,5]} (\pi_2 \wedge \neg \pi_3)) 
            \leq \delta  \nonumber \\
            \Leftrightarrow  
            \exists t^\prime \in [3,5] \mathcal{R}((\Xi_5, t^\prime)\not \models \pi_2) \leq \delta \wedge 
            \mathcal{R}((\Xi_5, t^\prime)\not \models \neg \pi_3) \leq \delta  \nonumber \\
            \shoveright{\wedge \forall t^{\prime \prime} \in [0, t^\prime] \mathcal{R}((\Xi_5, t^{\prime \prime}) \not \models \pi_1) \leq \delta} \nonumber \\   
            {\Leftrightarrow  \exists t^\prime \in [3,5] \rho(-\alpha_2(X_{t^\prime}))) \leq \delta \wedge \rho(\alpha_3(X_{t^\prime})) \leq \delta} \\
            \wedge \forall t^{\prime \prime} \in [0, t^\prime] \rho(-\alpha_1(X_{t^{\prime \prime}})) \leq \delta. \nonumber
        \end{multline*}
        Using Definition \ref{def:risk_tightened_predicate} and letting 
        $\overline{\pi}_1 = \{\overline{\alpha}_{1\rho-} \geq 0\}$, $\overline{\pi}_2 = \{\overline{\alpha}_{2\rho-} \geq 0\}$, and $\overline{\pi}_3 = \{\overline{\alpha}_{3\rho+} \geq 0\}$
        result in\vspace*{-.25cm}
        \begin{multline*}
            \mathcal{R}((\Xi_5, 0) \not \models \pi_1 \mathscr{U}_{[3,5]} (\pi_2 \wedge \neg \pi_3)) \leq \delta  \\
            \Leftrightarrow \exists t^\prime\in [3,5] \overline{\alpha}_{2\rho-}(\delta, X_{t^\prime})\geq 0 \wedge \overline{\alpha}_{3\rho+}(\delta, X_{t^\prime})\geq 0 \nonumber \\
            \wedge \forall t^{\prime \prime} \in [0, t^\prime] \overline{\alpha}_{1\rho-}(\delta, X_{t^{\prime \prime}})\geq 0 \\
            \Leftrightarrow  \exists t^\prime \in [3,5] (\overline{\xi}_5, t^\prime)\models \overline{\pi}_2 \wedge (\overline{\xi}_5, t^\prime)\models \overline{\pi}_3 \\
            \wedge \forall t^{\prime \prime} \in [0, t^\prime] (\overline{\xi}_5, t^{\prime \prime})\models \overline{\pi}_1 \\
            \Leftrightarrow (\overline{\xi}_5, 0) \models \overline{\pi}_1\mathscr{U}_{[3,5]}(\overline{\pi}_2 \wedge \overline{\pi}_3)
            \Rightarrow \overline{\varphi} := \overline{\pi}_1\mathscr{U}_{[3,5]}(\overline{\pi}_2 \wedge \overline{\pi}_3).
    \end{multline*}
\end{example}

\subsection{Explicit reformulation for affine predicates and Distributionally Robust Value at Risk} \label{sec:reformulation_affine_drvar}
In Section \ref{sec:formula_risk_to_atomic_risk} we obtained constraints on atomic predicates of the form: $\rho(-\alpha(X_t)) \leq \delta$ or $\rho(\alpha(X_t)) \leq \delta$. In Section \ref{sec:risk_tightened_predicates} we reformulated the system, derived risk-tightened atomic predicate constraints with affine predicates ($\alpha(x) = a^T x+b$), and presented the deterministic system. In this section we turn to evaluating the tightened constraints for a particular choice of the risk metric $\rho$: the Distributionally Robust Value at Risk (DR-VaR).
Under DR-VaR with ambiguity set $\mathscr{P} = \mathscr{P}^W_t$ from Section \ref{sec:ctrl_design}, the atomic predicate risk constraint, with $\delta$ constrained to $(0,1)$, can be rewritten as:
\begin{align}
    \rho(\alpha(X_t)) \leq \delta &\Leftrightarrow \sup_{\mathbb{P} \in \mathscr{P}}\mathbb{P}[a^T X_t +b \geq 0] \leq \delta \nonumber \\
    &\Leftrightarrow \inf_{\mathbb{P} \in \mathscr{P}}\mathbb{P}[a^T X_t +b \leq 0] \geq 1-\delta  \label{eq:rho_plus_as_inf} \\
    \rho(-\alpha(X_t)) \leq \delta &\Leftrightarrow \sup_{\mathbb{P} \in \mathscr{P}}\mathbb{P}[-a^T X_t -b \geq 0] \leq \delta \nonumber \\
    &\Leftrightarrow \inf_{\mathbb{P} \in \mathscr{P}}\mathbb{P}[-a^T X_t -b \leq 0] \geq 1-\delta   \label{eq:rho_minus_as_inf}
\end{align} 

We now present Lemma \ref{lemma:deterministic_constraints} which explicitly presents the deterministic constraints equivalent to \eqref{eq:rho_plus_as_inf} and \eqref{eq:rho_minus_as_inf}. 

\begin{lemma} \label{lemma:deterministic_constraints}
    The DR-VaR constraints \eqref{eq:rho_plus_as_inf} and \eqref{eq:rho_minus_as_inf} are equivalent to the deterministic affine inequality constraints
    \begin{align}
        a^T \overline{x}_t + b + \Delta \| \Sigma_{W_{0:t-1}}^{1/2}L_{t-1}^T a \|_2 \leq 0, \label{eq:rho_plus_deterministic_constraint} \\
        a^T \overline{x}_t + b - \Delta \| \Sigma_{W_{0:t-1}}^{1/2}L_{t-1}^T a \|_2 \geq 0 \label{eq:rho_minus_deterministic_constraint}
   \end{align}
    respectively, where $\Delta = \sqrt{(1-\delta)/\delta}$, $\Sigma_{W_{0:t-1}} = \mathbb{E}[(W_{0:t-1}-\overline{W}_{0:t-1})(W_{0:t-1}-\overline{W}_{0:t-1})^T]$, $\overline{W}_{0:t-1} = \mathbb{E}[W_{0:t-1}]$, $\overline{x}_t$ is the deterministic state 
    given by \eqref{eq:x_det_sys}
    with $\overline{x}_0 = x_0$.
\end{lemma}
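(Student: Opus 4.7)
The plan is to reduce the distributionally robust VaR constraint to a one-sided Chebyshev (Cantelli-type) bound on the scalar random variable $Y := a^T X_t + b$ and then solve for the tight worst-case in closed form. Using the decomposition in \eqref{eq:X_sum}--\eqref{eq:X_0mean}, I first observe that $Y = a^T \overline{x}_t + b + a^T L_{t-1}(W_{0:t-1} - \overline{W}_{0:t-1})$ is a scalar random variable with mean $\mu_Y = a^T \overline{x}_t + b$ (a deterministic quantity fixed by $\overline{x}_t$) and variance $\sigma_Y^2 = a^T L_{t-1}\Sigma_{W_{0:t-1}} L_{t-1}^T a = \|\Sigma_{W_{0:t-1}}^{1/2} L_{t-1}^T a\|_2^2$. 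Crucially, since $\mathscr{P}^W_t$ fixes only the mean and covariance of $W_{0:t-1}$, the induced ambiguity set over the distribution of $Y$ is exactly the family of scalar distributions with these prescribed first two moments.

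Next I would rewrite \eqref{eq:rho_plus_as_inf} as $\sup_{\mathbb{P} \in \mathscr{P}} \mathbb{P}[Y \geq 0] \leq \delta$ and invoke the tight one-sided worst-case probability bound (used for DR-VaR in \cite{zymler2013worst}): for any scalar random variable with known mean $\mu_Y$ and variance $\sigma_Y^2$, the supremum of $\mathbb{P}[Y \geq 0]$ over all such distributions equals $\sigma_Y^2/(\sigma_Y^2 + \mu_Y^2)$ when $\mu_Y < 0$, and equals $1$ otherwise. Since $\delta \in (0,1)$, the constraint forces $\mu_Y < 0$ and reduces to $\sigma_Y^2/(\sigma_Y^2 + \mu_Y^2) \leq \delta$. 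Rearranging gives $(1-\delta)\sigma_Y^2 \leq \delta\,\mu_Y^2$, i.e., $|\mu_Y| \geq \Delta\, \sigma_Y$ with $\Delta = \sqrt{(1-\delta)/\delta}$; combined with $\mu_Y < 0$ this is precisely $\mu_Y + \Delta \sigma_Y \leq 0$, which is \eqref{eq:rho_plus_deterministic_constraint}.

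For \eqref{eq:rho_minus_as_inf}, I would apply the identical argument to $-Y$, which has mean $-\mu_Y$ and the same variance $\sigma_Y^2$. The tight Cantelli bound then yields $-\mu_Y + \Delta \sigma_Y \leq 0$, i.e., $\mu_Y - \Delta \sigma_Y \geq 0$, which is \eqref{eq:rho_minus_deterministic_constraint}. The two cases are handled symmetrically by flipping the sign of $a^T \overline{x}_t + b$.

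The main obstacle---and the only nontrivial step---is justifying that the formula $\sigma_Y^2/(\sigma_Y^2+\mu_Y^2)$ is not just an upper bound but is actually attained within the mean-covariance ambiguity set, so that the reduction is an equivalence rather than a (possibly conservative) implication. This is a classical tightness result established via a two-point extremal distribution and, in the DRO setting, underpins the reformulation of DR-VaR as a second-order cone constraint; once this tightness is granted, everything else is routine algebra on a scalar quadratic in $\mu_Y$ and $\sigma_Y$.
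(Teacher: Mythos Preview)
Your proposal is correct and is essentially the same argument the paper uses: the paper's proof simply substitutes the mean decomposition \eqref{eq:x_det_sys} into \eqref{eq:rho_plus_as_inf}--\eqref{eq:rho_minus_as_inf} and then invokes \cite[Theorem~3.1]{calafiore2006distributionally}, which is precisely the tight one-sided Chebyshev/Cantelli reformulation of the moment-based DR chance constraint that you spell out explicitly. The only difference is that you unpack the scalar computation and cite \cite{zymler2013worst} for the tightness, whereas the paper compresses everything into a one-line citation of Calafiore--El~Ghaoui.
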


\begin{proof}
    The proof follows by using \eqref{eq:x_det_sys} in \eqref{eq:rho_plus_as_inf} and \eqref{eq:rho_minus_as_inf} and applying \cite[Theorem 3.1]{calafiore2006distributionally} to both.
\end{proof}

\section{Numerical Experiments}
Consider $X:=(X_{x1},X_{y1}, X_{x2}, X_{y2})$ to be a four dimensional random variable and the following formula:
\begin{align} \label{eq:ex_formula}
    \varphi:=\square_{[0,3]}(\varphi_{a1} \wedge \varphi_{a2}) \wedge \diamondsuit_{[2,3]} \varphi_r \wedge \square_{[1,3]}\varphi_f
\end{align} 
where $\varphi_{a1}:= \pi_1\vee\pi_2\vee \pi_3\vee \pi_4$, $\varphi_{a2}:= \pi_5\vee\pi_6\vee \pi_7\vee \pi_8$, $\varphi_r:=\pi_9\wedge\pi_{10}\wedge \pi_{11}\wedge \pi_{12}$, and $\varphi_f:=\pi_{13} \wedge \pi_{14} \wedge \pi_{15} \wedge \pi_{16}$ encode two agents $ag_1 := (X_{x1},X_{y1})$ and $ag_2 := (X_{x2},X_{y2})$ avoiding a square set of side length 1 centered at $P_a:=(0,0)$ and agent $ag_1$ reaching a square set of side length 1 centered at $P_r := (2,0)$ while the two retain a maximum infinity norm distance $d:=1$ of each other. 
We have:
$\alpha_1(X) :=-X_{x1}-0.5$,
$\alpha_2(X) := X_{x1}-0.5$,
$\alpha_3(X):=-X_{y1}-0.5$,
$\alpha_4(X):= X_{y1}-0.5$,
$\alpha_5-\alpha_8$ are similar,
$\alpha_9(X):= X_{x1}-1.5$,
$\alpha_{10}(X):= 2.5-X_{x1}$,
$\alpha_{11}(X):= 0.5+X_{y1}$,
$\alpha_{12}(X):= 0.5-X_{y1}$,
$\alpha_{13}(X):= 1-X_{x1}+X_{x2}$,
$\alpha_{14}(X):= 1+X_{x1}-X_{x2}$,
$\alpha_{15}(X):= 1-X_{y1}+X_{y2}$,
$\alpha_{16}(X):= 1+X_{y1}-X_{y2}$.

We use the BluSTL \cite{donze2015blustl} package in its closed-loop deterministic setting for our results. 
The package uses YALMIP \cite{lofberg2004yalmip} and reformulates an STL specification into a Mixed-Integer Linear Program (MILP), and then solves an MPC problem.
Each agent is a double integrator with four states (position and velocity along two directions), two control inputs (along velocity directions), and four additive disturbances on the states. 
The disturbances are sampled from a 0-mean 3 degree of freedom t-distribution scaled to 0.005-variance and applied to velocity states only.
BluSTL converts the continuous time system to discrete-time with 0.1sec system steps and 0.2sec controller steps. 
We perform two types of simulations. 
Type 1: The controller ignores the disturbance in the dynamics and uses deterministic MPC with the nominal model, but is evaluated with the disturbance in closed-loop.
Type 2: We explicitly incorporate the uncertainty using our proposed STL risk analysis, transform the specification into a deterministic risk-tightened STL formula, compute the control, and then evaluate in closed-loop with the disturbance. We use $\delta_1 = 0.1$ risk bound for predicates involving the obstacle avoidance and $\delta_2 = 0.5$ risk bound for predicates involving the goal region. This puts more emphasis on obstacle avoidance. We also saturate the tightening parameter $\Delta \|\Sigma_{W_{0:t-1}}^{1/2} L_{t-1}^T a \|_2$ after 1 second to limit the effect of increasing prediction uncertainty.
In both cases the objective $J$ minimizes the sum of the 1-norm of the control inputs and the distance to the goal. 
We run 100 simulations for each type and present them in Fig.~\ref{fig:results}.

\begin{figure}[htb]
\begin{center}
    \includegraphics[width=1.0\columnwidth]{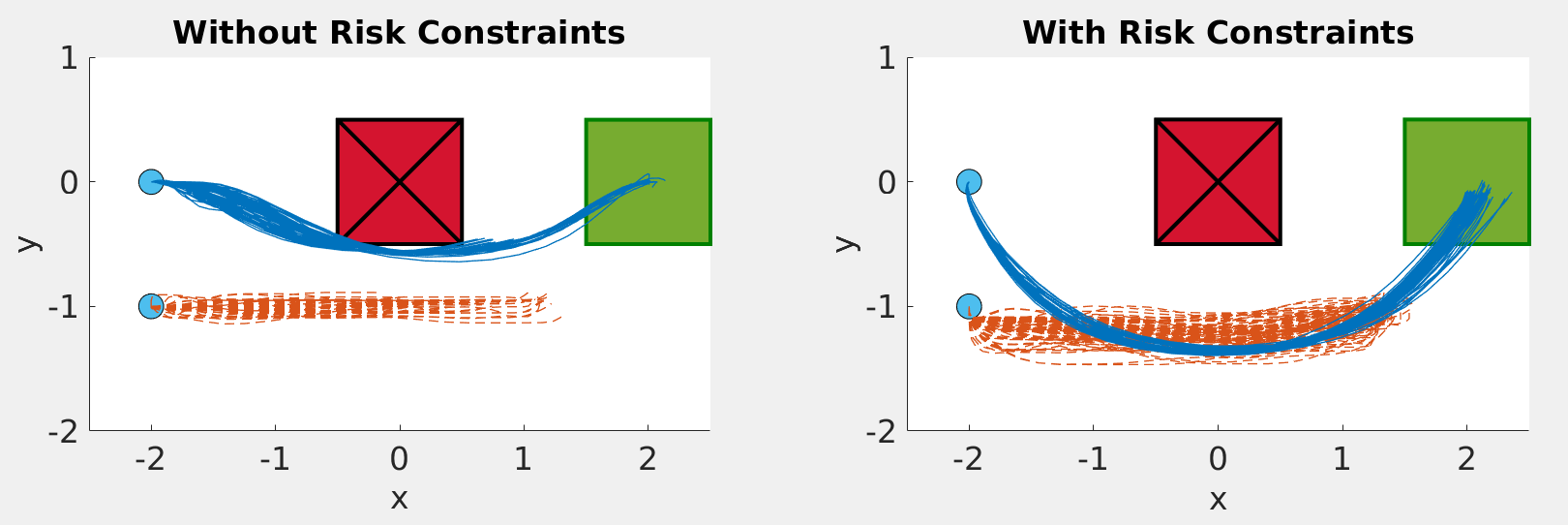}
    \caption{Type 1 (left) and type 2 (right) simulations of 100 sample trajectories for the task \eqref{eq:ex_formula}. Agent 1 (solid blue) and agent 2 (dashed orange) start from the center of the two blue circles, must avoid the obstacle (red square with cross) and agent 1 must reach the goal (green square).}
    \label{fig:results}
\end{center}
\end{figure}

\vspace{-0.2cm}
Type 1 trajectories (left) get very close and occasionally collide with the obstacle. The disturbance causes 84 of the 100 simulations to fail to reach the goal. 
Type 2 (right) trajectories, however, satisfy the risk bounds and remain sufficiently far from the obstacle, and 98 of the 100 reach the goal.
\vspace{-0.05cm}

\section{Conclusion and Future Work}
We presented a general framework for risk-based STL specifications for stochastic systems using axiomatic risk theory. 
We are exploring several extensions and variations in ongoing and future work, including explicit reformulations for various risk measures and ambiguity sets, non-affine predicates, non-linear dynamics, infinite-horizon persistent tasks, and alternative risk semantics.

\bibliographystyle{IEEEtran}
\bibliography{references}

\end{document}